\documentclass[12pt,a4paper]{article}
\usepackage[utf8]{inputenc}
\usepackage[T1]{fontenc}
\usepackage{lmodern}
\usepackage{natbib}
\usepackage{amsmath,amssymb,amsthm}
\usepackage{booktabs}
\usepackage{graphicx}
\usepackage{geometry}
\usepackage{xcolor}
\usepackage[colorlinks,linkcolor=blue,citecolor=blue!50!black,urlcolor=blue]{hyperref}
\usepackage{setspace}
\newtheorem{assumption}{Assumption}
\newtheorem{proposition}{Proposition}
\newtheorem{remark}{Remark}

\newcommand{\cum}{\operatorname{cum}}

\title{Testing the Exclusion Restriction with a Single Instrument:\\ Identification through Fourth-Order Cumulants under Latent Confounding}
\author{Fernando Delbianco\thanks{Department of Economics, Universidad Nacional del Sur, and INMABB--CONICET, Bah\'ia Blanca, Argentina. \href{mailto:fernando.delbianco@uns.edu.ar}{fernando.delbianco@uns.edu.ar}. ORCID 0000-0002-1560-2587.}}
\date{}

\begin{document}
\maketitle

\begin{abstract}
\noindent With a single instrument the exclusion restriction is untestable from second moments, and adding the instrument to the structural equation fails because the regressor is endogenous. When the endogeneity comes from a latent common cause with non-Gaussian components, the restriction becomes generically testable from the fourth-order cumulants of the first-stage and 2SLS residuals: a closed-form scalar $\theta$ is zero under exclusion and non-zero otherwise, while the direct effect itself is identified only up to two points. A companion statistic measures identification strength. Simulations show correct size under confounding, where the naive OLS test always rejects. In Card's (1995) data the test is uninformative, and says so.\\[4pt]
\textbf{Keywords}: instrumental variables; exclusion restriction; non-Gaussianity; higher-order cumulants; latent confounding; overcomplete ICA.\\
\textbf{JEL}: C12, C26, C36.
\end{abstract}

\section{Introduction}\label{sec:intro}

Consider the textbook linear IV model with one instrument $Z$, one endogenous regressor $X$ and outcome $Y$,
\begin{equation}\label{eq:model}
X = \pi Z + v, \qquad Y = \beta X + \gamma Z + u, \qquad Z \perp\!\!\!\perp (u,v),
\end{equation}
where $u$ and $v$ are dependent (that is what makes $X$ endogenous) and $\gamma=0$ is the exclusion restriction. With only one instrument the model is just identified, and the exclusion restriction is not testable from the covariance matrix of $(Z,X,Y)$: the reduced form delivers $\pi$ and $\rho=\beta\pi+\gamma$, two numbers for three structural parameters. Regress $Y$ on $(X,Z)$ and test the coefficient of $Z$ does not work, because $X$ is correlated with the error and the coefficient of $Z$ inherits a bias of $-\pi\,\mathrm{Cov}(u,v)/\mathrm{Var}(v)$ even when $\gamma=0$ \citep[Problem 5.5]{wooldridge2010}. This is the sense in which identification cannot be tested with a single instrument.

This note shows that the conclusion rests on using second moments only. If the dependence between $u$ and $v$ comes from a latent common cause and the unobservables are non-Gaussian, the fourth-order cumulants of two residuals (i.e. the first-stage and the 2SLS residuals) satisfy a restriction under $\gamma=0$ that fails, generically, under $\gamma\neq 0$. The statistic is a closed-form function of sample cumulants, and its distribution is bootstrapped.

The idea joins two literatures that rarely meet. In econometrics, higher-order moments identify what second moments cannot: \citet{reiersol1950} and \citet{geary1942} for errors in variables, \citet{cragg1997}, \citet{dagenais1997}, \citet{lewbel1997} and \citet{erickson2002} for higher-moment instruments and cumulant GMM, and \citet{rigobon2003}, \citet{lanne2017}, \citet{gouriéroux2017}, \citet{guay2021} and \citet{lanne2026} for structural VARs identified through heteroskedasticity or non-Gaussianity, with \citet{arias2021} on proxy validity. In machine learning, the same mathematics (identifiability of overcomplete independent component analysis (ICA) \citep{eriksson2004}) underlies non-Gaussian causal discovery \citep{shimizu2006,shimizu2011}, its extension to latent confounders \citep{hoyer2008,salehkaleybar2020,wang2023} and the search for valid instruments \citep{silva2017}. The contribution here is a test that is directly about the exclusion restriction of one instrument under the confounding that motivates IV. Applying the confounder-free LiNGAM model of \citet{shimizu2006} to $(Z,X,Y)$ would not do, because it assumes independent structural errors, so OLS is consistent and the test is valid only when it is not needed (Section~\ref{sec:mc}).

The framework is linear with homogeneous effects. It complements tests of instrument validity under heterogeneous effects \citep{kitagawa2015,huber2015,mourifie2017}, which rely on inequality restrictions rather than non-Gaussianity, and differs from approaches that relax the restriction \citep{conley2012,nevo2012} or diagnose first-stage heterogeneity \citep{dieterle2016}. The price of testability is a maintained common-factor structure and non-zero fourth cumulants, and the latter is measurable in the data.

\section{Model and identification}\label{sec:theory}

\begin{assumption}\label{a:model}
$(Z,X,Y)$ is generated by \eqref{eq:model} with
\begin{equation}\label{eq:factor}
v = \lambda_1 H + e_X, \qquad u = \lambda_2 H + e_Y,
\end{equation}
where $H$, $e_X$, $e_Y$ are mutually independent, mean zero, with finite eighth moments, and independent of $Z$. The instrument is relevant, $\pi\neq 0$, and it is confounded with the outcome through $H$: $\lambda_1\lambda_2\neq 0$.
\end{assumption}

\begin{assumption}\label{a:ng}
$\kappa_4(H)\neq 0$ and $\kappa_4(e_X)\neq 0$, where $\kappa_4(\cdot)$ denotes the fourth cumulant (excess kurtosis times squared variance).
\end{assumption}

Assumption \ref{a:model} is the single-factor IV problem: $X$ is endogenous because an unobservable $H$ (e.g. ability) enters both equations. Nothing is assumed about the distribution of $Z$, which may be binary, and $e_Y$ may be Gaussian. Assumption \ref{a:ng} restricts only the two unobservables that load on $X$.

Let $c=\gamma/\pi$. The reduced-form coefficients are $\pi$ and $\rho=\beta\pi+\gamma$, and the population 2SLS coefficient is $\beta_{IV}=\rho/\pi=\beta+c$. Define the two population residuals
\begin{equation}\label{eq:resid}
v = X-\pi Z, \qquad w = Y-\beta_{IV}X = u - c\,v .
\end{equation}
The instrument drops out of $w$ whether or not $\gamma=0$. This is the algebraic form of untestability, because no test of dependence between $Z$ and the 2SLS residual has power. The information is in the joint distribution of $(v,w)$. Writing $\tilde H=\lambda_1 H$, \eqref{eq:factor} and \eqref{eq:resid} give
\begin{equation}\label{eq:mix}
v = \tilde H + e_X, \qquad w = m\,\tilde H + d\,e_X + e_Y, \qquad m = \lambda_2/\lambda_1 - c,\quad d = -c .
\end{equation}
Under exclusion, $d=0$. In other words, an independent non-Gaussian component ($e_X$) enters $v$ but not $w$. Under a violation every component of $v$ enters $w$. Equation \eqref{eq:mix} is a $2\times 3$ overcomplete ICA whose mixing directions are identified when at most one source is Gaussian \citep{eriksson2004}. The proposition makes this operational.

Let $s_r=\cum(\underbrace{v,\dots,v}_{4-r},\underbrace{w,\dots,w}_{r})$, $r=0,1,2,3$, be the fourth-order cross cumulants of $(v,w)$, and let $A=\kappa_4(\tilde H)$, $B=\kappa_4(e_X)$.

\begin{proposition}\label{prop:main}
Under Assumption \ref{a:model},
\begin{enumerate}
\item[(i)] $s_r = m^r A + d^r B$ for $r=0,1,2,3$; hence $s_{r+2}=(m+d)s_{r+1}-md\,s_r$ for $r=0,1$.
\item[(ii)] $s_0s_2-s_1^2 = AB\,(m-d)^2 = AB\,(\lambda_2/\lambda_1)^2$ and $s_1s_3-s_2^2 = AB\,md\,(m-d)^2$.
\item[(iii)] Under Assumption \ref{a:ng} in addition, $D\equiv s_0s_2-s_1^2\neq 0$ and
\begin{equation}\label{eq:theta}
\theta \equiv \frac{s_1s_3-s_2^2}{s_0s_2-s_1^2} = md = c\left(c-\frac{\lambda_2}{\lambda_1}\right).
\end{equation}
Consequently $\gamma=0$ implies $\theta=0$, and $\theta=0$ implies $\gamma\in\{0,\ \pi\lambda_2/\lambda_1\}$.
\item[(iv)] The unordered pair $\{m,d\}$ is identified as the roots of $x^2-(m+d)x+md$, with $m+d=(s_0s_3-s_1s_2)/D$. Hence $\gamma$ is set-identified: $\gamma\in\{\pi c^{*}: c^{*}\in\{-m,-d\}\}$, a two-point set.
\end{enumerate}
\end{proposition}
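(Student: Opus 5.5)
The plan is to get everything from multilinearity of joint cumulants and the fact that cross cumulants of independent variables vanish. Parts (ii)–(iv) then follow from (i) by elementary algebra on a two-term exponential sum.

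For (i), I would substitute the mixing representation \eqref{eq:mix}, $v=\tilde H+e_X$ and $w=m\tilde H+d\,e_X+e_Y$, into $s_r=\cum(v,\dots,v,w,\dots,w)$ and expand by multilinearity. By Assumption \ref{a:model}, $\tilde H$, $e_X$, $e_Y$ are mutually independent and have finite eighth (hence fourth) moments. Any joint cumulant whose arguments involve two distinct independent sources is therefore zero. Only the "pure" terms survive: $\cum(\tilde H,\tilde H,\tilde H,\tilde H)=A$ with weight $1^{4-r}m^r$, and $\kappa_4(e_X)=B$ with weight $1^{4-r}d^r$.

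The pure $e_Y$ term has weight $0^{4-r}\cdot 1^r$. This vanishes because $r\le 3$, so at least one argument equals $v$, and $v$ does not load on $e_Y$. This is exactly why the proposition stops at $r=3$ and needs no assumption on $e_Y$. So $s_r=m^rA+d^rB$. Since $m$ and $d$ are the roots of $x^2-(m+d)x+md$, each sequence $m^r$ and $d^r$ satisfies $x_{r+2}=(m+d)x_{r+1}-md\,x_r$, and so does their linear combination.

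For (ii), I would plug (i) in directly:
\[
s_0s_2-s_1^2=(A+B)(m^2A+d^2B)-(mA+dB)^2=AB(m^2+d^2-2md)=AB(m-d)^2 .
\]
Similarly, $s_1s_3-s_2^2=AB(md^3+dm^3-2m^2d^2)=AB\,md\,(m-d)^2$. Finally, $m-d=\lambda_2/\lambda_1$ by the definitions in \eqref{eq:mix}.

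For (iii), Assumption \ref{a:ng} gives $B\neq0$ and $A=\lambda_1^4\kappa_4(H)\neq0$, and Assumption \ref{a:model} gives $\lambda_1\lambda_2\neq0$. Hence $D\neq0$, and dividing yields $\theta=md=(\lambda_2/\lambda_1-c)(-c)=c(c-\lambda_2/\lambda_1)$. Then $\gamma=0$ gives $c=0$ and so $\theta=0$. Conversely, $\theta=0$ forces $c\in\{0,\lambda_2/\lambda_1\}$, and using $\pi\neq0$ this translates to $\gamma=\pi c\in\{0,\pi\lambda_2/\lambda_1\}$.

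For (iv), I would compute $s_0s_3-s_1s_2=(A+B)(m^3A+d^3B)-(mA+dB)(m^2A+d^2B)=AB(m^3+d^3-md^2-m^2d)=AB(m+d)(m-d)^2$. Dividing by $D$ gives $m+d$, and together with $md=\theta$ this pins down the quadratic and hence the unordered pair $\{m,d\}$. Since $d=-c$ but we cannot tell which root is $d$, $c^*\in\{-m,-d\}$ and $\gamma=\pi c^*$ lies in a two-point set.

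The only step needing care is the vanishing argument in (i): the $e_Y$ contribution must drop out, and the independence assumption must be enough to kill mixed cumulants of the non-identically-mixed sources. Everything after that is routine polynomial algebra.
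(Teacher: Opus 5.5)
Your proof is correct and follows the paper's argument for (i)--(iii) essentially step for step. In (iv) you get $m+d$ by expanding $s_0s_3-s_1s_2=AB(m+d)(m-d)^2$ directly, while the paper solves the $2\times 2$ linear system given by the recurrence at $r=0,1$ (determinant $\pm D\neq 0$); that yields the same formula for $m+d$ and $md=\theta$ at once, so the two routes are equivalent.

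The one point you assert without justification is that the roots cannot be labelled. The paper supplies this by noting that exchanging $(\tilde H,e_X)$ together with $(m,d)\mapsto(d,m)$ leaves the law of $(v,w)$ unchanged. The swapped configuration is itself an admissible model, with $c=-m$ and confounding ratio $d-m=-\lambda_2/\lambda_1\neq 0$. This is what makes the two-point set sharp, rather than just a set that contains $\gamma$.
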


\begin{proof}
(i) Cumulants are multilinear and cross cumulants of independent variables vanish, so by \eqref{eq:mix} $\cum(v^{(4-r)},w^{(r)})=1^{4-r}m^{r}\kappa_4(\tilde H)+1^{4-r}d^{r}\kappa_4(e_X)+0^{4-r}1^{r}\kappa_4(e_Y)$, and the last term is zero for $r\le 3$. A sequence $m^rA+d^rB$ satisfies the recurrence with characteristic roots $m$ and $d$. (ii) Direct substitution: $s_0s_2-s_1^2=(A+B)(m^2A+d^2B)-(mA+dB)^2=AB(m-d)^2$ and $s_1s_3-s_2^2=(mA+dB)(m^3A+d^3B)-(m^2A+d^2B)^2=AB\,md(m-d)^2$; $m-d=\lambda_2/\lambda_1$ from \eqref{eq:mix}. (iii) $A,B\neq 0$ by Assumption \ref{a:ng} and $\lambda_2/\lambda_1\neq 0$ by Assumption \ref{a:model}, so $D\neq 0$ and the ratio equals $md$; substituting $m,d$ from \eqref{eq:mix} gives \eqref{eq:theta}, whose zeros are $c=0$ and $c=\lambda_2/\lambda_1$. (iv) The two-by-two linear system formed by the recurrence at $r=0,1$ has determinant $-D\neq 0$; solving it gives $m+d$ and $md$, hence the roots. Exchanging $(\tilde H,e_X)$ maps $(m,d)$ to $(d,m)$ without changing the distribution of $(v,w)$, so the roots cannot be labelled.
\end{proof}

\begin{remark}[What is and is not identified]
The restriction is testable although $\gamma$ is not point-identified: $\gamma=0$ says that zero is a root of the recurrence, a labelling-free property, whereas labelling the roots is the ICA ambiguity. A violation is masked only at $\gamma=\pi\lambda_2/\lambda_1$, where the instrument's direct effect per unit of first-stage effect equals the confounder's ratio; power vanishes there (Figure~\ref{fig:power}).
\end{remark}

\begin{remark}[Identification strength]
$D=AB(\lambda_2/\lambda_1)^2$ collects the three ingredients the test needs: non-Gaussianity of the confounder and of the idiosyncratic first-stage error, and confounding itself. If any is absent, $D=0$ and $\theta$ is undefined (with $\lambda_2=0$ there is no endogeneity and the naive OLS test is valid). A bootstrap $t$-ratio for $\hat D$ therefore plays the role the first-stage $F$ plays for relevance, and should be reported with $\hat\theta$.
\end{remark}

\begin{remark}[Extensions]
With $k$ latent factors the sequence has rank $k+1$ and the argument applies to the $(k+1)\times(k+1)$ Hankel matrix of cumulants up to order $2k+2$. Under skewness the third-order sequence obeys the same recurrence and adds the restriction $t_2s_1-t_1s_2=0$. Covariates are partialled out of $(Z,X,Y)$ first, leaving \eqref{eq:mix} unchanged.
\end{remark}

\section{Test statistic}\label{sec:test}

Let $\hat\pi$ and $\hat\beta_{IV}$ be the OLS first-stage and 2SLS coefficients (after partialling out covariates, if any), $\hat v=X-\hat\pi Z$ and $\hat w=Y-\hat\beta_{IV}X$ the demeaned residuals, standardised to unit variance, and $\hat s_r$ the plug-in sample cumulants, e.g.\ $\hat s_2=\overline{\hat v^2\hat w^2}-1-2\,\overline{\hat v\hat w}^{\,2}$. The statistic is $\hat\theta=(\hat s_1\hat s_3-\hat s_2^2)/(\hat s_0\hat s_2-\hat s_1^2)$. A smooth function of sample moments up to order four, with $O_p(n^{-1/2})$ error from $\hat\pi,\hat\beta_{IV}$, $\hat\theta$ is $\sqrt n$-consistent and asymptotically normal under Assumptions \ref{a:model}--\ref{a:ng}, and the nonparametric bootstrap is consistent for its distribution \citep{hall1992}. The test rejects when $|\hat\theta|/\widehat{se}_{boot}$ exceeds the normal critical value ($B=199$); the same resamples give $\hat D/\widehat{se}_{boot}$.

\section{Monte Carlo evidence}\label{sec:mc}

Data follow \eqref{eq:model}--\eqref{eq:factor} with $\pi=0.7$, $\beta=0.5$, $\lambda_1=\lambda_2=1$, a Gaussian instrument $Z\sim N(0,1)$, and $(H,e_X,e_Y)$ i.i.d.\ unit-variance Laplace ($\kappa_4=3$), centred exponential ($\kappa_4=6$) or symmetric bimodal Gaussian mixture ($\kappa_4=-1.28$); the masking point is $\gamma=0.7$. Table~\ref{tab:mc} reports 5\% rejection frequencies (500 replications) for the cumulant test and for the naive OLS $t$-test of $Z$ in the regression of $Y$ on $(X,Z)$.

\begin{table}[htbp]
\centering\footnotesize\setlength{\tabcolsep}{4pt}
\caption{Rejection frequencies at the 5\% level, 500 replications, $B=199$}
\label{tab:mc}
\begin{tabular}{lcccccccccc}
\toprule
 & & \multicolumn{4}{c}{Cumulant test, $H_0:\theta=0$} & & \multicolumn{4}{c}{Naive OLS $t$-test on $Z$} \\
\cmidrule{3-6}\cmidrule{8-11}
 Errors & $\gamma\ \backslash\ n$ & 500 & 1000 & 2000 & 4000 & & 500 & 1000 & 2000 & 4000 \\
\midrule
Laplace ($\kappa_4=3$) & 0.0 & 0.000 & 0.006 & 0.024 & 0.056 & & 0.998 & 1.000 & 1.000 & 1.000 \\
 & 0.1 & 0.004 & 0.024 & 0.158 & 0.314 & & 0.986 & 1.000 & 1.000 & 1.000 \\
 & 0.2 & 0.002 & 0.088 & 0.498 & 0.800 & & 0.664 & 0.940 & 0.994 & 1.000 \\
 & 0.3 & 0.002 & 0.164 & 0.794 & 0.984 & & 0.172 & 0.208 & 0.374 & 0.630 \\
\addlinespace
Centred exponential ($\kappa_4=6$) & 0.0 & 0.000 & 0.006 & 0.044 & 0.052 & & 1.000 & 1.000 & 1.000 & 1.000 \\
 & 0.1 & 0.002 & 0.060 & 0.264 & 0.480 & & 0.968 & 1.000 & 1.000 & 1.000 \\
 & 0.2 & 0.012 & 0.214 & 0.678 & 0.898 & & 0.662 & 0.912 & 0.994 & 1.000 \\
 & 0.3 & 0.024 & 0.324 & 0.902 & 0.986 & & 0.158 & 0.218 & 0.412 & 0.646 \\
\addlinespace
Bimodal mixture ($\kappa_4=-1.28$) & 0.0 & 0.012 & 0.022 & 0.036 & 0.056 & & 1.000 & 1.000 & 1.000 & 1.000 \\
 & 0.1 & 0.050 & 0.244 & 0.352 & 0.646 & & 0.980 & 1.000 & 1.000 & 1.000 \\
 & 0.2 & 0.196 & 0.666 & 0.924 & 1.000 & & 0.730 & 0.946 & 1.000 & 1.000 \\
 & 0.3 & 0.448 & 0.922 & 1.000 & 1.000 & & 0.124 & 0.204 & 0.340 & 0.618 \\
\bottomrule
\multicolumn{11}{l}{\footnotesize DGP: $\pi=0.7$, $\beta=0.5$, $\lambda_1=\lambda_2=1$, $Z\sim N(0,1)$; $\gamma=0$ is the null. Nominal level 5\%.}\\
\end{tabular}

\end{table}

First, under confounding the naive OLS test rejects a true null, as \citet{wooldridge2010} predicts, and its rejection rate then \emph{falls} as $\gamma$ grows (to about 20\% at $\gamma=0.3$, $n=1000$) because bias and true effect offset. The confounder-free LiNGAM estimator is no better: with $\gamma=0$, $n=5000$ and $t_5$ errors its direct-effect estimate converges to $-0.22$ ($\lambda_1=\lambda_2=1$) or $+0.35$ ($\lambda_2=-1$), then also reversing the order of $X$ and $Y$. Second, the cumulant test is conservative below $n=2000$ (null rejection rates of 0--4\%) and at the nominal level by $n=4000$. Third, power grows with $n$ and with the precision of sample kurtosis: with the bounded-support mixture a violation of $\gamma=0.2$ is detected in two thirds of samples at $n=1000$ and 92\% at $n=2000$; with heavier-tailed Laplace and exponential errors similar power needs $n=4000$ and $n=2000$--$4000$. Fourth-cumulant identification is weaker than second-moment identification; it is a tool for survey-sized samples. Figure~\ref{fig:power} shows the dip at the masking point.

\begin{figure}[htbp]
\centering
\includegraphics[width=.8\linewidth]{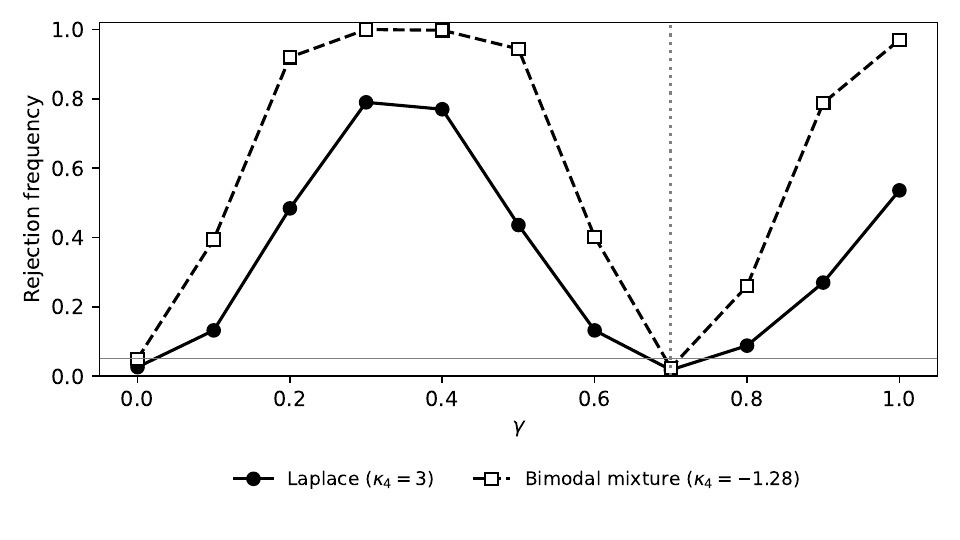}
\caption{Rejection frequency against $\gamma$ at $n=2000$ (500 replications). The vertical line marks the masking point $\gamma=\pi\lambda_2/\lambda_1=0.7$.}
\label{fig:power}
\end{figure}

\section{Application: Card (1995)}\label{sec:card}

\citet{card1995} instruments schooling with an indicator of growing up near a four-year college (NLSYM, $n=3010$; data from \citealp{wooldridge2010}); ability is the canonical $H$. Table~\ref{tab:card} reports IV output and cumulant statistics without covariates and after partialling out Card's controls (experience and its square, race, South, SMSA, SMSA in 1966, region dummies).

\begin{table}[htbp]
\centering\small\setlength{\tabcolsep}{4pt}
\caption{Card (1995) data: exclusion-restriction test for \texttt{nearc4}}
\label{tab:card}
\begin{tabular}{lcc}
\toprule
 & No covariates & Card's covariates \\
\midrule
Observations & 3010 & 3010 \\
First stage $\hat\pi$ & 0.829 & 0.320 \\
First-stage $F$ & 63.9 & 13.3 \\
$\hat\beta_{OLS}$ & 0.052 & 0.075 \\
$\hat\beta_{IV}$ (s.e.) & 0.188 (0.026) & 0.132 (0.055) \\
Excess kurtosis of $\hat v$, $\hat w$ & 0.24, 0.13 & -0.05, 0.48 \\
Jarque--Bera $p$-value, $\hat v$, $\hat w$ & 0.000, 0.032 & 0.000, 0.000 \\
$\hat D$ ($t$-ratio) & 0.0033 (0.36) & 0.0039 (0.45) \\
$\hat\theta$ (s.e.) & -0.26 (12.7) & 1.94 (51.4) \\
$p$-value, $H_0:\theta=0$ & 0.984 & 0.970 \\
Naive OLS coefficient on $Z$ ($p$-value) & 0.115 (0.000) & 0.018 (0.279) \\
\bottomrule
\multicolumn{3}{l}{\footnotesize Bootstrap s.e., $B=1999$. Covariates partialled out of $(Z,X,Y)$ first.}\\
\end{tabular}

\end{table}

The instrument is strong and the residuals reject normality, but their excess kurtosis is small ($-0.05$ to $0.5$), so $AB$ in Proposition~\ref{prop:main}(ii) is near zero: $\hat D$ has a $t$-ratio of $0.3$--$0.5$ and $\hat\theta$ a standard error around $13$ or higher. The data are consistent with exclusion and equally with sizeable violations: this sample is uninformative about the restriction, as a first-stage $F$ of two would be about $\beta$. The naive OLS coefficient on \texttt{nearc4} without controls, $0.115$ ($p<0.001$), falsely rejects the null, reflecting ability bias rather than a structural violation.

\section{Concluding remarks}

Between the untestable second-moment case and the trivial independent-errors case lies the one applied researchers face: a regressor confounded by a latent factor. There, fourth-order cumulants of the first-stage and 2SLS residuals reveal whether zero is a root of a two-term recurrence, exactly the exclusion restriction, while the direct effect stays identified only up to two points. The test has correct size where the naive regression fails, needs no assumption on the instrument's distribution, and comes with a diagnostic that says when the data cannot answer. Its limits are those of higher-moment identification, i.e. power needs kurtosis and large samples, and one masking point exists. Several factors, third-cumulant restrictions and weak-identification-robust inference are natural extensions.

\bigskip
\noindent\textbf{Data and code}: the Python implementation and replication scripts for all tables and figures are available at the Zenodo repository of the paper: \texttt{https://bit.ly/3UNg1OX}.

\bibliographystyle{apalike}

\begin{thebibliography}{99}

\bibitem[Arias et al.(2021)]{arias2021} Arias, J.~E., Rubio-Ram\'irez, J.~F., \& Waggoner, D.~F. (2021). Inference in Bayesian proxy-SVARs. \textit{Journal of Econometrics}, 225(1), 88--106.

\bibitem[Card(1995)]{card1995} Card, D. (1995). Using geographic variation in college proximity to estimate the return to schooling. In L.~N. Christofides, E.~K. Grant, \& R. Swidinsky (Eds.), \textit{Aspects of Labour Market Behaviour: Essays in Honour of John Vanderkamp} (pp. 201--222). University of Toronto Press.

\bibitem[Conley et al.(2012)]{conley2012} Conley, T.~G., Hansen, C.~B., \& Rossi, P.~E. (2012). Plausibly exogenous. \textit{Review of Economics and Statistics}, 94(1), 260--272.

\bibitem[Cragg(1997)]{cragg1997} Cragg, J.~G. (1997). Using higher moments to estimate the simple errors-in-variables model. \textit{RAND Journal of Economics}, 28, S71--S91.

\bibitem[Dagenais \& Dagenais(1997)]{dagenais1997} Dagenais, M.~G., \& Dagenais, D.~L. (1997). Higher moment estimators for linear regression models with errors in the variables. \textit{Journal of Econometrics}, 76(1--2), 193--221.

\bibitem[Dieterle \& Snell(2016)]{dieterle2016} Dieterle, S.~G., \& Snell, A. (2016). A simple diagnostic to investigate instrument validity and heterogeneous effects when using a single instrument. \textit{Labour Economics}, 42, 76--86.

\bibitem[Erickson \& Whited(2002)]{erickson2002} Erickson, T., \& Whited, T.~M. (2002). Two-step GMM estimation of the errors-in-variables model using high-order moments. \textit{Econometric Theory}, 18(3), 776--799.

\bibitem[Eriksson \& Koivunen(2004)]{eriksson2004} Eriksson, J., \& Koivunen, V. (2004). Identifiability, separability, and uniqueness of linear ICA models. \textit{IEEE Signal Processing Letters}, 11(7), 601--604.

\bibitem[Geary(1942)]{geary1942} Geary, R.~C. (1942). Inherent relations between random variables. \textit{Proceedings of the Royal Irish Academy, Section A}, 47, 63--76.

\bibitem[Gouri\'eroux et al.(2017)]{gouriéroux2017} Gouri\'eroux, C., Monfort, A., \& Renne, J.-P. (2017). Statistical inference for independent component analysis: Application to structural VAR models. \textit{Journal of Econometrics}, 196(1), 111--126.

\bibitem[Guay(2021)]{guay2021} Guay, A. (2021). Identification of structural vector autoregressions through higher unconditional moments. \textit{Journal of Econometrics}, 225(1), 27--46.

\bibitem[Hall(1992)]{hall1992} Hall, P. (1992). \textit{The Bootstrap and Edgeworth Expansion}. Springer.

\bibitem[Hoyer et al.(2008)]{hoyer2008} Hoyer, P.~O., Shimizu, S., Kerminen, A.~J., \& Palviainen, M. (2008). Estimation of causal effects using linear non-Gaussian causal models with hidden variables. \textit{International Journal of Approximate Reasoning}, 49(2), 362--378.

\bibitem[Huber \& Mellace(2015)]{huber2015} Huber, M., \& Mellace, G. (2015). Testing instrument validity for LATE identification based on inequality moment constraints. \textit{Review of Economics and Statistics}, 97(2), 398--411.

\bibitem[Kitagawa(2015)]{kitagawa2015} Kitagawa, T. (2015). A test for instrument validity. \textit{Econometrica}, 83(5), 2043--2063.

\bibitem[Lanne et al.(2017)]{lanne2017} Lanne, M., Meitz, M., \& Saikkonen, P. (2017). Identification and estimation of non-Gaussian structural vector autoregressions. \textit{Journal of Econometrics}, 196(2), 288--304.

\bibitem[Lanne et al.(2026)]{lanne2026} Lanne, M., Liu, K., \& Luoto, J. (2026). Identifying structural vector autoregressions via non-Gaussianity of potentially dependent shocks. \textit{The Econometrics Journal}, forthcoming.

\bibitem[Lewbel(1997)]{lewbel1997} Lewbel, A. (1997). Constructing instruments for regressions with measurement error when no additional data are available, with an application to patents and R\&D. \textit{Econometrica}, 65(5), 1201--1213.

\bibitem[Mourifi\'e \& Wan(2017)]{mourifie2017} Mourifi\'e, I., \& Wan, Y. (2017). Testing local average treatment effect assumptions. \textit{Review of Economics and Statistics}, 99(2), 305--313.

\bibitem[Nevo \& Rosen(2012)]{nevo2012} Nevo, A., \& Rosen, A.~M. (2012). Identification with imperfect instruments. \textit{Review of Economics and Statistics}, 94(3), 659--671.

\bibitem[Reiers\o l(1950)]{reiersol1950} Reiers\o l, O. (1950). Identifiability of a linear relation between variables which are subject to error. \textit{Econometrica}, 18(4), 375--389.

\bibitem[Rigobon(2003)]{rigobon2003} Rigobon, R. (2003). Identification through heteroskedasticity. \textit{Review of Economics and Statistics}, 85(4), 777--792.

\bibitem[Salehkaleybar et al.(2020)]{salehkaleybar2020} Salehkaleybar, S., Ghassami, A., Kiyavash, N., \& Zhang, K. (2020). Learning linear non-Gaussian causal models in the presence of latent variables. \textit{Journal of Machine Learning Research}, 21(39), 1--24.

\bibitem[Shimizu et al.(2006)]{shimizu2006} Shimizu, S., Hoyer, P.~O., Hyv\"arinen, A., \& Kerminen, A. (2006). A linear non-Gaussian acyclic model for causal discovery. \textit{Journal of Machine Learning Research}, 7, 2003--2030.

\bibitem[Shimizu et al.(2011)]{shimizu2011} Shimizu, S., Inazumi, T., Sogawa, Y., Hyv\"arinen, A., Kawahara, Y., Washio, T., Hoyer, P.~O., \& Bollen, K. (2011). DirectLiNGAM: A direct method for learning a linear non-Gaussian structural equation model. \textit{Journal of Machine Learning Research}, 12, 1225--1248.

\bibitem[Silva \& Shimizu(2017)]{silva2017} Silva, R., \& Shimizu, S. (2017). Learning instrumental variables with structural and non-Gaussianity assumptions. \textit{Journal of Machine Learning Research}, 18(120), 1--49.

\bibitem[Wang \& Drton(2023)]{wang2023} Wang, Y.~S., \& Drton, M. (2023). Causal discovery with unobserved confounding and non-Gaussian data. \textit{Journal of Machine Learning Research}, 24(165), 1--61.

\bibitem[Wooldridge(2010)]{wooldridge2010} Wooldridge, J.~M. (2010). \textit{Econometric Analysis of Cross Section and Panel Data} (2nd ed.). MIT Press.

\end{thebibliography}

\end{document}